\journalname{}
\begin{document}
\title{Quantum digital-to-analog conversion algorithm using decoherence}
\titlerunning{Quantum digital-to-analog conversion algorithm using decoherence}

\author{Akira SaiToh}
\institute{A. SaiToh \at
Department of Computer Science and Engineering, Toyohashi University of Technology,
1-1 Hibarigaoka, Tenpaku-cho, Toyohashi, Aichi 441-8580, Japan\\
\email{saitoh@sqcs.org}}

\date{Received: date / Accepted: date}

\maketitle

\begin{abstract}
We consider the problem of mapping digital data encoded on a quantum register
to analog amplitudes in parallel. It is shown to be unlikely that a fully
unitary polynomial-time quantum algorithm exists for this problem; NP becomes
a subset of BQP if it exists. In the practical point of view, we propose
a nonunitary linear-time algorithm using quantum decoherence. It tacitly uses an
exponentially large physical resource, which is typically a huge number of identical
molecules. Quantumness of correlation appearing in the process of the algorithm is also
discussed.
\end{abstract}

\keywords{Digital-to-analog conversion \and Quantum algorithm \and Bulk-ensemble computation}

% PACS
% 03.65.Yz 	Decoherence; open systems; quantum statistical methods
% 03.65.Ud 	Entanglement and quantum nonlocality
% 03.67.Ac      Quantum algorithms and protocols
% 03.67.Lx      Quantum computation architectures and implementations
% 89.20.Ff      Computer science and technology
% 89.70.Eg      Computational complexity
\PACS{03.67.Ac \and 03.65.Yz \and 89.70.Eg \and 03.65.Ud}

\section{Introduction}\label{sec_intro}
There have been many conventional electrical and algorithmic implementations for
digital-to-analog conversion (DAC) \cite{Rad11}. For a very brief explanation, given
digital data $f(k)\in\{0,1\}^n$ for label $k\in\{0,1\}^m$, DAC produces an analog
signal with an amplitude proportional to $f(k)$ for each $k$. (We do not give a
particular physical meaning to the variable $k$ although it is often a label of a
time slot. It can be regarded as a label of a channel especially in the context of
parallel processing.)
The DAC problem can be formulated as follows.
\begin{definition}\label{def_DAC}
{\bf\tt DAC problem}\\
{\bf Instance}: Integers $m, n \ge 1$, set $\{k\}$ of integers $k\in\{0,1\}^m$, and
function $f: \{0,1\}^m\rightarrow \{0,1\}^n$.\\
{\bf Limitation}: Function $f$ can be used at most once for each $k$ only for the purpose of
preparing digital data $f(k)$.\\
{\bf Task}: Generate analog amplitudes $V(k)\propto f(k)$ for all $k$
(here, the proportionality constant is a certain constant independent of $k$).
\end{definition}
A classical parallel solver for this problem can be intuitively implemented; one may prepare the
same converter for each $k$ although this is not economical. In view of recent development
of quantum information processing, it is natural to consider the possibility of using
quantum parallelism instead of preparing many identical converters. Then the parallelization
will be quite economical. We are seeking for this possibility in this contribution, but we
will first notice that it is not straight-forward to find a useful quantum
algorithm for DAC.

Quantum parallelism plays an essential role in quantum computing \cite{G99,NC2000}.
In a conceptual explanation, a unitary operation
\[
U_f:|0\rangle|k\rangle\mapsto |f(k)\rangle|k\rangle
\]
corresponding to a function $f: \{0,1\}^m\rightarrow\{0,1\}^n$ is often considered
(here, $k\in \{0,1,\ldots,2^m-1\}$). It is tacitly assumed that $U_f$ is constructed
with a reasonable number of single-qubit and two-qubit unitary operations. Once
$U_f$ is constructed, we may apply it to a superposition of quantum states by 
linearity of a unitary transformation.
Given the initial state $|0\rangle\sum_{k=0}^{2^m-1}c_k|k\rangle$ with complex
amplitudes $c_k$ satisfying $\sum_k|c_k|^2=1$, the following evolution can be performed:
\begin{equation}
|0\rangle\sum_kc_k|k\rangle\overset{U_f}{\mapsto}\sum_kc_k|f(k)\rangle|k\rangle.
\end{equation}
The resultant state is a sort of a digital state because $f(k)$ is kept as a digital
datum in the left register. Thus this process is an encoding of digital data
as a quantum state. For simplicity, we may often choose $c_k=1/\sqrt{2^m}$. To
perform DAC in a quantum manner, we need to change the amplitude $c_k$ so that $|c_k|^2$
is proportional to $f(k)$ for each $k$ in parallel. This will however turn out to be a
hard problem in the context of computational cost, later in this paper.
Furthermore, it is not easy to retrieve the amplitude distribution as classical signals
in a normal quantum computing model since a measurement destroys the state; this is
another problem.

There is a conventional quantum algorithm for analog-to-digital and digital-to-analog
conversions (ADC and DAC) in a different context: Schm\"user and Janzing \cite{SJ05}
considered the problem of conversions between a continuous wave function $\psi(x)$
($x\in[0,L]$ stands for a position) and a quantum state $|\Psi\rangle
=\frac{1}{\sqrt{2^n}}\sum_{j=0}^{2^n-1}\psi(jL/2^n)|j\rangle$ of an $n$-qubit
register. They employed the Jaynes-Cummings model as a physical model and showed
procedures for the conversions using several physical time evolutions that are possible
under the model. Their algorithm, however, requires an exponentially large interaction
strength or an exponential interaction time. Besides, it is probably unnatural to
call their conversions as ADC and DAC because $|\Psi\rangle$ has amplitude information
in its complex amplitudes. They are rather conceptually close to analog-to-analog
conversions in this sense.
%%%It is more natural to consider the
%%%conversions between $|\Psi\rangle$ and $\frac{1}{\sqrt{2^n}}\sum_{j=0}^{2^n-1}|
%%%\widetilde{\psi}(jL/2^n)\rangle|j\rangle$ where $\widetilde{\psi}(jL/2^n)$ is a binary
%%%representation of $\psi(jL/2^n)$ using a certain limited number of bits.

Here, we are considering the DAC problem along with Definition~\ref{def_DAC} and going to
construct a certain quantum algorithm for it. In this context, there have been several
related works: Ventura and Martinez \cite{VM99} considered a problem to generate a superposition
with desired phase factors for the complex amplitudes that have unit absolute values. (Thus their
problem is a phase modulation problem.) They developed an algorithm whose cost scales quasi-linearly
in the number of components of a superposition; this cost is exponential in the number of
qubits. Other related works are those about decomposition of an arbitrary unitary transformation
\cite{T99,V04,NKS06,Sh06,LRY13}. (They are applicable to constructing a unitary DAC.)
They all require exponential number of quantum gates for decomposition. These conventional results
suggest that it is unlikely to achieve exponential speedup of DAC by using quantumness as a
resource. This is very natural in the sense that {\tt NP} will be included in {\tt BQP} if
such a speedup is possible by using a fully unitary process. It is also unlikely for any
parallel computer to perform a parallel DAC within polynomial resources. These points will be
discussed in Sec.~\ref{sec_complexity}. (See, {\em e.g.}, Ref.~\cite{W08} for quantum
computational complexity classes, such as {\tt BQP}.) In addition, use of these conventional
works does not resolve the other problem of how to retrieve classical signals from a resultant
quantum state.

In this contribution, we develop a nonunitary quantum algorithm for the DAC problem. We can
encode a given instance as a quantum state in Hilbert space, while we move to Liouville space
(the operator Hilbert space) in subsequent steps in the algorithm. We may also start with a
mixed state from the first step. We utilize decoherence processes in addition to standard
quantum gates to construct the algorithm. The resultant mixed state possesses all of the DAC
output data with pointer addresses. We introduce a nondemolition retrieval of desired signals,
which relies on the ensemble property of a physical system. The run time of our algorithm is
linear in the input size, while the physical resource it consumes is exponential. It should be
noted that quantum computing in Liouville space is not a very rare topic. See, {\em e.g.},
Refs.~\cite{Br00,Ta02}.

As our model is not a standard quantum computer model, it is also of interest to
consider quantumness of computation. Although we do not have a direct measure for it,
we will discuss quantumness of correlation between registers, in particular, quantum
discord \cite{OZ2001}.

This paper is organized as follows. We first prove the hardness of the parallel
DAC problem in Sec.~\ref{sec_complexity}. Then we introduce our algorithm in
Sec.~\ref{sec_alg}. Complexity of the algorithm is also discussed in the section.
In addition, quantumness of correlation appearing in the algorithm is argued
in Sec.~\ref{sec_quantumness}. Section~\ref{sec_con} gives a summary and remarks
on the results.

\section{Computational difficulty of parallel DAC}\label{sec_complexity}
As mentioned in Sec.~\ref{sec_intro}, it is unlikely to achieve an exponential
speedup of DAC using any parallel computer. This is because the following proposition
holds.
\begin{proposition}\label{prop_chard}
Consider the DAC problem. Assume that {\em i.i.d.} noise appear in the DAC outputs and
the noise intensity for each output $V(k)$ is bounded above by a constant $\varepsilon_{\rm th}$.
Suppose we can achieve the amplitudes $V(k)$ for all $k$ within ${\rm poly}(m, n)$ time and
${\rm poly}(m, n)$ physical resource on average for any $f$. (The internal circuit of DAC may
possess a probabilistic behavior. This is why we discuss the average cost.) Then SAT is
solvable within polynomial cost on average.
\end{proposition}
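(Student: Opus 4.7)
The plan is to exhibit a polynomial-cost reduction from SAT to the hypothesized parallel DAC. Given a SAT instance $\phi$ on $m$ variables, set $n=1$ and take $f(k):=\phi(k)$, which is a polynomially-sized Boolean function. By assumption the DAC machine produces all amplitudes $V(k)\propto \phi(k)$ for $k\in\{0,1\}^m$ within poly$(m)$ expected time and poly$(m)$ expected physical resource, with i.i.d.\ noise of intensity at most the constant $\varepsilon_{\rm th}$ on each output. Because $\phi$ is satisfiable exactly when at least one $V(k)$ is significantly nonzero, it suffices to build on top of the DAC a polynomial-cost detector for this event; combined with the standard downward self-reduction (fixing one variable at a time and re-invoking the decider), this also yields an explicit satisfying assignment with only an $O(m)$ multiplicative overhead.

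For the detection step I would form a single noise-robust aggregate observable, most naturally $S:=\sum_k V(k)$, which is proportional to the number $N_\phi$ of satisfying assignments and thus discriminates $N_\phi=0$ from $N_\phi\geq 1$ by a thresholding test. Given i.i.d.\ per-output noise bounded by $\varepsilon_{\rm th}$, a Chebyshev/Hoeffding argument on the sample mean over $T$ independent DAC runs shows that $T=\mathrm{poly}(m)$ invocations suffice, provided the aggregate signal-to-noise ratio remains at least inverse-polynomial. Since each DAC invocation costs poly$(m)$ on average and only poly$(m)$ invocations are chained, the total expected cost of the resulting SAT algorithm stays polynomial, matching the conclusion's "on average" qualifier.

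The principal obstacle is reconciling the aggregation step with the poly$(m)$ physical-resource premise. A literal analog summation of $2^m$ i.i.d.\ noisy outputs produces a noise floor of order $\sqrt{2^m}\,\varepsilon_{\rm th}$ that would dwarf a single satisfying contribution, and a device exposing $2^m$ distinct output channels already violates the resource budget. The crux of the proof is therefore to argue that any poly-resource apparatus truly realizing the DAC must already encode its outputs in a compressed, directly accessible form from which a suitably normalized aggregate (or, alternatively, a nonlinear summary such as an analog max/OR built from the same poly$(m)$ hardware) is readable within the per-output noise budget; the averaged cost clauses on both sides of the implication are then what absorb any remaining probabilistic slack in the accounting.
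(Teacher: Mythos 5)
There is a genuine gap, and you have in fact located it yourself in your final paragraph without closing it. Your reduction sets $n=1$ and $f(k)=\phi(k)$, then aggregates $S=\sum_k V(k)$; as you correctly observe, the accumulated noise over $2^m$ channels (of order $\sqrt{2^m}\,\varepsilon_{\rm th}$ for zero-mean i.i.d.\ noise, and bounded only by $2^m\varepsilon_{\rm th}$ in general) swamps the single contribution $V_{\rm LSB}$ of one satisfying assignment, so the thresholding test has an exponentially small signal-to-noise ratio and no polynomial number of repetitions rescues it. Your proposed escape --- that any poly-resource DAC apparatus ``must already encode its outputs in a compressed, directly accessible form'' admitting a favorably normalized aggregate or an analog max/OR --- is an unproven structural claim about arbitrary devices, not a derivation from the stated hypotheses, and the proof cannot rest on it.

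The paper closes exactly this gap by exploiting the $n$-bit \emph{output range} of the DAC rather than taking $n=1$: it sets $f(k)=\varphi(k)_{n-1}0_{n-2}\cdots 0_0$, placing the truth value of $\varphi$ in the most significant bit, and then chooses $n$ (still only linear in $m$) so that $2^{n-1}V_{\rm LSB} > 2^m(\varepsilon_{\rm th}+c)$ for a constant $c>0$. With this amplification a single satisfying assignment produces an analog signal of intensity $2^{n-1}V_{\rm LSB}$ that exceeds the \emph{total} noise bound $2^m\varepsilon_{\rm th}$ of the whole mixture by a constant factor $1+c/\varepsilon_{\rm th}$, so a constant number of accumulations of the aggregate signal suffices to decide satisfiability (via the central limit theorem for the i.i.d.\ noise). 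If you replace your $n=1$ choice with this most-significant-bit encoding and the accompanying choice of $n$, your aggregation-and-threshold argument goes through; the downward self-reduction you append is unnecessary for the proposition, which only asserts that SAT is \emph{decidable} at polynomial average cost.
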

This is easily proved:
\begin{proof}
Consider a conjunctive normal form (CNF) $\varphi(x_0,\ldots,x_{m-1})$ with
variables $x_i\in\{0,1\}$. Let us write $k=x_0\cdots x_{m-1}$ and choose $f$
such that $f(k)=\varphi(k)_{n-1}0_{n-2}0_{n-3}\cdots0_{0}$ (the subscripts are bit labels).
We choose $n$ so that $2^{n-1}V_{\rm LSB} > 2^m (\varepsilon_{\rm th}+c)$, where $V_{\rm LSB}$
stands for the least-significant-bit voltage (namely, the smallest nonzero output amplitude
of a DAC device) and $c>0$ is a certain small constant\footnote{It is a tacit assumption that
any DAC is designed to satisfy $V_{\rm LSB}\gtrapprox\varepsilon_{\rm th}$ \cite{devDAC,Rad11}.
Thus, setting $n$ to a value slightly larger than $m$ should be enough to satisfy the inequality
$2^{n-1}V_{\rm LSB} > 2^m (\varepsilon_{\rm th}+c)$. Even if the tacit assumption does not
hold, a value of $n$ enough to satisfy the inequality scales linearly in $m$ as long as
$\varepsilon_{\rm th}$ is a constant.}.
Prepare the set $\{0,\ldots,2^m-1\}$ for $\{k\}$.
Then, after DAC is complete for all $k$, we look at the mixture of resultant amplitudes $V(k)$.
Apart from noise, for each $k$, a nonzero analog signal is generated for nonzero $\varphi(k)_{n-1}$ only.
This signal corresponds to the most significant bit [namely, the ($n-1$)th bit] so that its
intensity is $2^{n-1}V_{\rm LSB}$. In contrast, the total noise intensity of the mixture is
obviously bounded above by $2^m \varepsilon_{\rm th}$. Therefore, the signal-per-noise ratio is
$>1+c/\varepsilon_{\rm th}$, which has a constant gap from $1$. Hence, on average, a constant number
of accumulation of data of the mixture is enough to decide if $\varphi$ is satisfiable. This
is owing to the central limit theorem in regard with the sum of random variables \cite{Shiryaev,Klenke}
(this holds by assumption of {\em i.i.d.} noise).
\end{proof}
Here we have considered the hardness of DAC in the context of general parallel
computation. It is also possible to consider it in the context of quantum computation,
as we will discuss next.

\subsection{Case of fully unitary quantum process}
Although we do not employ a fully unitary process in our algorithm to be shown in
Sec. \ref{sec_alg}, here it is meaningful to consider a quantum DAC described as a
unitary map. This will show the hardness of quantum DAC in the context of standard quantum
computation. Let us consider a map $\hat{V}$ having the following property as a unitary map
realizing a quantum DAC.
\[
\hat{V}: (1/\sqrt{N_k})\sum_k|f(k)\rangle|k\rangle
         \mapsto \sum_k \hat{f}(k)|f(k)\rangle|k\rangle
\]
where $N_k$ is the size of set $\{k\}$ and $\hat{f}(k)$'s are complex amplitudes
satisfying the conditions $|\hat{f}(k)|^2\propto f(k)+\epsilon$ and
$\sum_k |\hat{f}(k)|^2=1$. Here, $\epsilon < 1/2^m$ is a small bias; without it,
$\hat{V}$ becomes nonunitary when $f(k)$ is always zero. (Of course, such an
$\epsilon$ is negligible in case there is nonzero $f(k)$.) Under these settings, the
existence of a quantum circuit realizing $\hat{V}$ is manifest: one can use a known
method \cite{Sh06} to implement an arbitrary unitary operation although this requires
an exponential number of basic quantum gates in general.

Let us now show that the following proposition holds, which indicates the hardness
of implementing $\hat{V}$. (Here, it should be recalled that $k$ is an $m$-bit string and
$f(k)$ is an $n$-bit string. As for the definition of {\tt BQP}, see, {\em e.g.},
Ref.~\cite{W08}.)
\begin{proposition}\label{prop_uhard}
Suppose there exists a quantum circuit realizing $\hat{V}$ with circuit depth and
circuit width ${\rm poly}(m, n)$ for any $f$. Then SAT is included in {\tt BQP}.
\end{proposition}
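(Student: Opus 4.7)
The plan is to mirror the reduction used in the proof of Proposition~\ref{prop_chard}, but carried out entirely inside a polynomial-size quantum circuit so that the hypothetical efficient implementation of $\hat{V}$ yields a bounded-error polynomial-time quantum decision procedure for SAT. Given a CNF $\varphi(x_0,\ldots,x_{m-1})$, I would set $n=m+O(1)$ and choose $f(k)=\varphi(k)_{n-1}0_{n-2}\cdots 0_{0}$, so that $f(k)=2^{n-1}$ when $k$ satisfies $\varphi$ and $f(k)=0$ otherwise. Since $\varphi$ admits a $\mathrm{poly}(m)$-size reversible Boolean circuit, standard techniques give a $\mathrm{poly}(m,n)$-depth, $\mathrm{poly}(m,n)$-width quantum circuit $U_f$ implementing $|0\rangle|k\rangle\mapsto|f(k)\rangle|k\rangle$.

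Next, I would prepare the initial state required by $\hat{V}$, namely $(1/\sqrt{2^m})\sum_k|f(k)\rangle|k\rangle$, by applying Hadamards to $m$ fresh qubits (producing a uniform superposition over $k$) followed by $U_f$. The assumed efficient $\hat{V}$ is then applied, producing $\sum_k\hat{f}(k)|f(k)\rangle|k\rangle$ with $|\hat{f}(k)|^2\propto f(k)+\epsilon$ and $\sum_k|\hat{f}(k)|^2=1$. The decision is made by measuring the most significant qubit of the first register in the computational basis and accepting on outcome $1$. The entire procedure manifestly has $\mathrm{poly}(m,n)$ depth and width, so once its acceptance probability is shown to have a constant gap between the satisfiable and unsatisfiable cases, SAT lies in \texttt{BQP}.

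The remaining step is the probability analysis. If $\varphi$ is unsatisfiable then $f(k)=0$ for every $k$, the first register is in the state $|0\rangle^{\otimes n}$, and the outcome is $0$ with probability $1$. If $\varphi$ is satisfiable with $s\ge 1$ satisfying assignments, the normalization constant is $s\cdot 2^{n-1}+2^m\epsilon$, so the probability of outcome $1$ equals $s\cdot 2^{n-1}/(s\cdot 2^{n-1}+2^m\epsilon)$. Using $\epsilon<1/2^m$ gives $2^m\epsilon<1$, hence this probability is at least $2^{n-1}/(2^{n-1}+1)\ge 2/3$, which is well inside the \texttt{BQP} promise.

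I expect the main obstacle to be purely in keeping the probability analysis clean: one must verify that the tiny bias $\epsilon$ introduced to preserve unitarity of $\hat{V}$ in the all-zero case never collapses the gap between the satisfiable and unsatisfiable acceptance probabilities, and that the overhead of encoding $\varphi$ into $U_f$ and of the final measurement does not destroy the polynomial resource bounds. Both points are routine once the reduction above is specified, so the bulk of the argument is the reduction and the short probability bound.
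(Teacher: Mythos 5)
Your proposal is correct and takes essentially the same route as the paper: the same encoding $f(k)=\varphi(k)_{n-1}0_{n-2}\cdots 0_{0}$, the same preparation of $(1/\sqrt{2^m})\sum_k|f(k)\rangle|k\rangle$ via Hadamards and $U_f$, and the same application of the assumed efficient $\hat{V}$, with the only (inessential) difference being the readout --- you measure the most significant qubit of the data register, while the paper measures the address register and classically verifies the sampled assignment; both give acceptance probability $>2/3$ on satisfiable instances and $0$ on unsatisfiable ones.
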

\begin{proof}
The proof is conceptually similar to the previous one.
Consider a CNF $\varphi:\{0,1\}^m\rightarrow\{0,1\}$ and choose $f$ such that
$f(k)=\varphi(k)_{n-1}0_{n-2}0_{n-3}\cdots0_{0}$ where $k$ is an $m$-bit string.
Let us choose $n\ge 2$.
First, we prepare the initial state $|0\rangle|0\rangle$ where the left register
consists of $n$ qubits and the right register consists of $m$ qubits.
Second, we apply $H^{\otimes m}$ to the right register where $H=(|0\rangle\langle0|
+|0\rangle\langle1|+|1\rangle\langle0|-|1\rangle\langle1|)/\sqrt{2}$ is an
Hadamard transform. The state becomes $(1/\sqrt{2^m})|0\rangle\sum_{k=0}^{2^m-1}|k\rangle$.
Third, because $f$ is implementable as a classical boolean circuit, it is straight-forward
to transform it to a quantum circuit using the right register as its input and the left
register as its output. This circuit changes the state into
$(1/\sqrt{2^m})\sum_{k=0}^{2^m-1}|f(k)\rangle|k\rangle$. As a convention, the internal cost
to apply $f$ is not considered since it is a sort of an oracle.
Fourth, we apply $\hat{V}$ to the state and obtain $\sum_k \hat{f}(k)|f(k)\rangle|k\rangle$.
Then we measure the right register. There are two cases:
(i) In case $\varphi$ is satisfiable,
among $m$-qubit states $|k\rangle$, those dissatisfying $\varphi$ individually have relative
population $< 1/2^{m+n-1}$ by the assumption on $\hat{f}(k)$. Thus the probability
$p_{\rm s}$ to obtain a bit string satisfying $\varphi$ by measuring the right register
is larger than $1/(1+\mu)$ with $\mu=(2^m-1)/2^{m+n-1}$. Because we chose $n\ge 2$, we
have $p_{\rm s}>2/3$.
\linebreak (ii) In case $\varphi$ is not satisfiable,
measuring the right register produces a uniformly random number from $\{k\}$.
Fifth, we use a classical circuit to verify the resultant $m$-bit string.
Case (i) results in ``yes'' with probability $>2/3$ and case (ii) results in
``no'' with probability $1$. By definition of {\tt BQP}, the proof is now completed.
\end{proof}
Here is an alternative statement of the proposition:
\begin{corollary}\label{cor_qhard}
There exist functions $f$ for which there is no quantum circuit to implement $\hat{V}$
with circuit depth and circuit width ${\rm poly}(m, n)$ unless ${\tt NP}\subseteq{\tt BQP}$.
\end{corollary}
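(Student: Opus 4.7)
The plan is to derive Corollary~\ref{cor_qhard} as the contrapositive of Proposition~\ref{prop_uhard}, augmented by a standard observation about {\tt NP}-completeness and closure of {\tt BQP} under polynomial-time many-one reductions. The statement is essentially logically equivalent to Proposition~\ref{prop_uhard}, so no new technical machinery should be needed; the only work is to make the quantifier flip precise and to upgrade ``SAT in {\tt BQP}'' to ``{\tt NP} $\subseteq$ {\tt BQP}.''

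First, I would negate the conclusion and assume that for every function $f$ there exists a quantum circuit implementing $\hat{V}$ with depth and width ${\rm poly}(m,n)$. This is exactly the hypothesis of Proposition~\ref{prop_uhard}, and so by that proposition SAT $\in$ {\tt BQP}. Next I would invoke the fact that SAT is {\tt NP}-complete under polynomial-time many-one (Karp) reductions, together with the standard observation that {\tt BQP} is closed under such reductions (since any BQP algorithm can be preceded by a polynomial-time classical preprocessor that computes the Karp reduction; equivalently, {\tt P} $\subseteq$ {\tt BQP} and polynomial-time reductions compose with BQP computations without changing the error bounds). Combining these two facts gives {\tt NP} $\subseteq$ {\tt BQP}. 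Taking the contrapositive then yields Corollary~\ref{cor_qhard}: if {\tt NP} $\not\subseteq$ {\tt BQP}, there must exist some $f$ for which no such polynomial circuit for $\hat{V}$ can exist.

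I do not anticipate a genuine obstacle here, since Proposition~\ref{prop_uhard} has already carried the substantive content (the SAT-to-DAC reduction and the constant gap $p_{\rm s}>2/3$). The only minor subtlety is to state explicitly that the ``for any $f$'' universal quantifier in Proposition~\ref{prop_uhard}'s hypothesis matches the negation of the existential ``there exist functions $f$'' in Corollary~\ref{cor_qhard}, and to remark on the {\tt NP}-completeness step so that the conclusion is phrased in terms of the class inclusion ${\tt NP}\subseteq{\tt BQP}$ rather than a single problem. These points are routine, so the proof should be just a few lines long.
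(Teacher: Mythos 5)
Your proposal is correct and matches the paper's intent exactly: the paper offers Corollary~\ref{cor_qhard} as ``an alternative statement'' of Proposition~\ref{prop_uhard} with no separate proof, implicitly relying on precisely the contrapositive you spell out, together with the {\tt NP}-completeness of SAT and closure of {\tt BQP} under Karp reductions. Your version is if anything slightly more careful than the paper, since you make the quantifier flip and the SAT-to-{\tt NP} upgrade explicit.
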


\section{Nonunitary algorithm}\label{sec_alg}
We have seen the hardness of parallel DAC in the previous section.
As it has turned out to be unlikely that a polynomial-time unitary quantum algorithm
exists, we seek for the possibility of polynomial-time nonunitary quantum algorithm.
As it has also turned out to be unlikely to have a polynomial-cost parallel
algorithm in a more general sense, we may compromise with increase in physical resource.
In particular, a massive parallelism of an ensemble of identical systems like a molecular
ensemble is considered a realistic resource in our strategy. Our algorithm is entirely
constructed as a completely-positive trace-preserving linear map (a CPTP map) and
hence physically feasible (see, {\em e.g.}, Refs.~\cite{BZ06,HG12} for the CPTP condition).

\subsection{Decoherence maps}
We first revisit two decoherence maps that are very common in quantum physics
\cite{NC2000,HG12}. These maps are used in our algorithm.
\begin{definition}
For probability $p$ and a $d\times d$ density matrix $\rho$, the {\em depolarization map}
$\Lambda_{\rm dpl}$ is defined as
\[
\Lambda_{\rm dpl}: \rho\mapsto (1-p)\rho + p({\rm Tr} \rho)I/d
\]
with $I$ the $d\times d$ identity matrix. We also write
\[
\Lambda_{\rm dpl}(p, \rho) = (1-p)\rho + p({\rm Tr} \rho)I/d.
\]
\end{definition}
\begin{definition}
For probability $p$ and a $d\times d$ density matrix $\rho$, the {\em dephasing map}
$\Lambda_{\rm ph}$ is defined as
\[
\Lambda_{\rm ph}: \rho\mapsto
(1-p)\rho + p\left(\sum_{i=0}^{d-1}\langle i|\rho|i\rangle|i\rangle\langle i|\right).
\]
We also write
\[
\Lambda_{\rm ph}(p, \rho)
= (1-p)\rho + p\left(\sum_{i=0}^{d-1}\langle i|\rho|i\rangle|i\rangle\langle i|\right).
\]
\end{definition}

These map are physically realizable in practice, in particular in magnetic
resonance systems like NMR and ESR, and linear optical systems. In magnetic
resonance systems, there is a long history of controlling spin relaxations
\cite{Meh2001}. Artificial dephasing is achievable by a gradient-field pulse
applied to a target spin together with decoupling pulses applied to other
spins \cite{XJ06}. Artificial depolarization is achievable by using the same
method in X, Y, and Z bases appropriately (see \cite[Sec. 8.3]{NC2000} for
the decomposition of the depolarizing error). Another way to realize the maps
is to use natural $T_1$ and $T_2$ relaxations \cite{Becker80} together with
decoupling pulses whose strength and durations are calibrated to achieve desirable
relaxations. In addition, artificial dephasing can also be achieved by randomly
flipping an ancillary spin coupled to a target spin as demonstrated by Kondo
{\em et al.} \cite{Kondo07}.

In linear optical systems too, it is possible to perform artificial decoherence
operations by using common optical devices. Suppose we employ polarization qubits.
Then, artificial depolarization can be made by a Sagnac-type interferometer
\cite{Alm07,Jeo13}. Artificial dephasing can also be achieved by using the birefringence
of an optical fiber \cite{Berg00,RoB13}. (See also Refs.~\cite{Ai07,La08} for other
ways to perform mixed-state processing in linear optics.)

Besides to the above decoherence maps, we use the unit sample function given by
\[
\delta(x)=\left\{
\begin{array}{ll}1&~\;(x=0)\\0&~\;(x\not = 0)\end{array}\right.
\]
to describe our algorithm.

\subsection{Algorithm construction}
In the following, we are going to present our algorithm.
First, we introduce a subroutine beforehand.\\
~\\
{\bf Subroutine {\tt S\_1}:}~\\
{\bf Input:} An $(m+n)$-qubit pure state
\[
|\psi_{fk}\rangle=|f_{n-1}f_{n-2}\ldots f_0\rangle^{\rm L}|k\rangle^{\rm R},
\]
where $f_i\in\{0,1\}$ ($i=0,\ldots,n-1$) and $k\in\{0,1\}^m$. Superscripts ${\rm L}$
and ${\rm R}$ stand for left and right registers.\\
{\bf Output:} $(m+2n)$-qubit mixed state
\[
\begin{split}
\tilde{\rho}_{fk}
=&|f_{n-1}\rangle\langle f_{n-1}|\otimes\cdots
\otimes|f_0\rangle\langle f_0|\otimes|k\rangle\langle k|\\
&\otimes\left[\left(\frac{I}{2}\right){\delta(f_{n-1})}+(|0\rangle\langle0|){\delta(f_{n-1}-1)}\right]\\
&\otimes\cdots
\otimes\left[\left(\frac{I}{2}\right){\delta(f_{0})}+(|0\rangle\langle0|){\delta(f_{0}-1)}\right].
\end{split}
\]
{\bf Construction:}
\begin{itemize}
\item[(i)] Attach $n$ ancillary qubits in the state $|0_{n-1}\cdots 0_0\rangle$ to the input.
\item[(ii)] For each $i$, select $|f_i\rangle$ and the corresponding ancilla qubit $|0_i\rangle$
to make a pair and apply the $0$-controlled-Hadamard gate (${\rm C_0H}$) with the control
bit $f_i$ and the target bit $0_i$.
Here, ${\rm C_0H}=|0\rangle\langle0|\otimes H + |1\rangle\langle1|\otimes I$.
The resultant state of each pair is
$|f_i\rangle\left[(|+\rangle){\delta(f_i)}+(|0\rangle){\delta(f_i-1)}\right]$ with
$|+\rangle=(|0\rangle+|1\rangle)/\sqrt{2}$.
\item[(iii)] For each qubit of register ${\rm R}$, apply the fully dephasing map
$\Lambda_{\rm ph}(p=1)$.
\item[(iv)] For each $i$, apply the fully dephasing map $\Lambda_{\rm ph}(p=1)$ to the
qubits in the pair made in (ii) individually. The resultant state of each pair is
$|f_i\rangle\langle f_i|\otimes\left[\left(\frac{I}{2}\right){\delta(f_{i})}
+(|0\rangle\langle0|){\delta(f_{i}-1)}\right]$.
\end{itemize}

One can regard Subroutine {\tt S\_1} as a map
\[
\Lambda_1: \mathcal{H}_2^{\otimes m+n}\rightarrow \mathcal{S}(\mathcal{H}_2^{\otimes m+2n}),
\]
where $\mathcal{H}_2$ is a two-dimensional Hilbert space and generally
$\mathcal{S}(\mathcal{H})$ is a space of density matrices (namely, a state space) acting
on Hilbert space $\mathcal{H}$. $\Lambda_1$ maps $|\psi_{fk}\rangle$ to $\tilde{\rho}_{fk}.$
Then, it is straightforward to find the following fact. 
\begin{remark}\label{rem_Lambda_1}
Map $\Lambda_1$ has the property:
\[
\Lambda_1(|\psi_{fk}\rangle+|\psi_{f'k'}\rangle)
=\Lambda_1(|\psi_{fk}\rangle)
+\Lambda_1(|\psi_{f'k'}\rangle),
\]
where $|\psi_{f'k'}\rangle = |{f'}_{n-1}\cdots {f'}_0\rangle^{\rm L}|k'\rangle^{\rm R}$.
\end{remark}
This is because, for $k \not = k'$, step (iii) erases the terms with factors $|k\rangle^{\rm R}\langle k'|$
and its Hermitian conjugates in a density matrix represented in the computational basis.

One can also regard Subroutine {\tt S\_1} as a map
\[
\widetilde{\Lambda}_1:\mathcal{S}(\mathcal{H}_2^{\otimes m+n})
\rightarrow \mathcal{S}(\mathcal{H}_2^{\otimes m+2n}).
\]
It maps $|\psi_{fk}\rangle\langle\psi_{fk}|$ to $\tilde{\rho}_{fk}.$
We can easily reach the following fact simply by linearity of the operations
used in the subroutine.
\begin{remark}
Map $\widetilde{\Lambda}_1$ has the property:
\[
\widetilde{\Lambda}_1(|\psi_{fk}\rangle\langle\psi_{fk}|+|\psi_{f'k'}\rangle\langle\psi_{f'k'}|)
=\widetilde{\Lambda}_1(|\psi_{fk}\rangle\langle\psi_{fk}|)
+\widetilde{\Lambda}_1(|\psi_{f'k'}\rangle\langle\psi_{f'k'}|).
\]
\end{remark}

Let us now introduce the main routine of our algorithm.\\~\\
{\bf Algorithm {\tt NONUNITARY\_QUANTUM\_DAC}:}\\
{\bf Input:} Integers $m, n \ge 1$ and function $f: \{0,1\}^m\rightarrow \{0,1\}^n$.\\
{\bf Output:} A mixture of component states, labeled by $k$, each of which possesses
a signal of analog amplitude $V(k)\propto f(k)$. [The explicit form of this mixture is found
as Eq.~(\ref{eq_rho_2}) or equivalently as Eq.~(\ref{eq_rho_2}') in the construction.
$V(k)$ can be individually derived from the mixture by using one of appropriate routines
defined later.]\\
{\bf Construction:}
\begin{enumerate}
\item Generate either of the pure state
\begin{equation}\label{eq_psi_o}
|\psi_o\rangle=(1/\sqrt{2^m})\sum_{k=0}^{2^m-1}|f(k)\rangle^{\rm L}|k\rangle^{\rm R}
\end{equation}
or the mixed state
\begin{equation}\label{eq_rho_o}
\rho_o=(1/2^m)\sum_{k=0}^{2^m-1}|f(k)\rangle^{\rm L}\langle f(k)|\otimes|k\rangle^{\rm R}\langle k|.
\end{equation}
Here, $|\psi_o\rangle$ can be generated from
$(I^{\otimes n}\otimes H^{\otimes m})
|0_{n-1}\cdots0_0\rangle^{\rm L}|0_{m-1}\cdots0_0\rangle^{\rm R}
=(1/\sqrt{2^m})\sum_{k}|0\rangle^{\rm L}|k\rangle^{\rm R}$. As $f$ is a (multiple-bit-output) boolean
function, it is easy to construct a quantum circuit $C_f$ mapping each $|0\rangle^{\rm L}|k\rangle^{\rm R}$ 
to $|f(k)\rangle^{\rm L}|k\rangle^{\rm R}$.\\
$\rho_o$ can be generated from $|0\rangle^{\rm L}\langle0|\otimes (I/2)^{\otimes m}
=(1/2^m)\sum_k |0\rangle^{\rm L}\langle0|\otimes |k\rangle^{\rm R}\langle k|$.
By linearity of a quantum circuit, $C_f$ maps each
$|0\rangle^{\rm L}\langle0|\otimes|k\rangle^{\rm R}\langle k|$ to
$|f(k)\rangle^{\rm L}\langle f(k)|\otimes|k\rangle^{\rm R}\langle k|$.
\item Apply Subroutine {\tt S\_1} to the state (either $|\psi_o\rangle$ or $\rho_o$).
In either of the cases, the resultant state is
\begin{equation}\label{eq_rho_1}\begin{split}
\rho_1=
\frac{1}{2^m}\sum_{k=0}^{2^m-1}\biggl\{&
|f_{n-1}(k)\rangle\langle f_{n-1}(k)|\otimes\cdots
\otimes|f_0(k)\rangle\langle f_0(k)|\otimes|k\rangle\langle k|\\
&\otimes\left[\left(\frac{I}{2}\right){\delta(f_{n-1}(k))}+(|0\rangle\langle0|){\delta(f_{n-1}(k)-1)}\right]^{{\rm a}_{n-1}}\\
&\otimes\cdots\otimes
\left[\left(\frac{I}{2}\right){\delta(f_{0}(k))}+(|0\rangle\langle0|){\delta(f_{0}(k)-1)}\right]^{{\rm a}_0}\biggr\},
\end{split}\end{equation}
where $f_i(k)$ is the $i$th bit of $f(k)$ and ${\rm a}_i$ stands for the $i$th ancilla
qubit.
\item For each $i\in\{0,\ldots,n-1\}$, apply $\Lambda_{\rm dpl}(p=p_i)$ to ${\rm a}_i$
with $p_i=1-2^{i-n+1}$. Let us write $q_i=2^{i-n+1}$.
The resultant state is
\begin{equation}\label{eq_rho_2}\begin{split}
\rho_2=
\frac{1}{2^m}\sum_{k=0}^{2^m-1}\biggl\{&
\left[|f_{n-1}(k)\rangle\langle f_{n-1}(k)|\otimes\cdots
\otimes|f_0(k)\rangle\langle f_0(k)|\right]^{\rm L}\otimes|k\rangle^{\rm R}\langle k|\\
&\otimes
\biggl[\left(\frac{I}{2}\right){\delta(f_{n-1}(k))}+
p_{n-1}\left(\frac{I}{2}\right){\delta(f_{n-1}(k)-1)}\\
&\hspace{7mm}+q_{n-1}(|0\rangle\langle0|){\delta(f_{n-1}(k)-1)}\biggr]^{{\rm a}_{n-1}}\\
&\otimes\cdots\otimes
\biggl[\left(\frac{I}{2}\right){\delta(f_{0}(k))}+
p_{0}\left(\frac{I}{2}\right){\delta(f_{0}(k)-1)}\\
&\hspace{15mm}+q_{0}(|0\rangle\langle0|){\delta(f_{0}(k)-1)}\biggr]^{{\rm a}_{0}}
\biggr\}.
\end{split}\end{equation}
This is the output of the algorithm. To clarify this output as a mixture of DAC outputs,
now we introduce a nonstandard representation inspired by the deviation density matrix
representation \cite{Jones2010}. Let us consider a deviation from $I/2$ for an ancilla
qubit state because $I/2$ does not produce a signal for any traceless observable.
We use $|\hat{Z}\rangle\!\rangle\equiv Z/2 + I/2$ (here $Z=|0\rangle\langle0|-
|1\rangle\langle1|$ is the Pauli Z operator) and neglect $I/2$ for each ancilla qubit.
We also use $|0\rangle\!\rangle\equiv |0\rangle\langle0|$ and $|1\rangle\!\rangle\equiv
|1\rangle\langle1|$ for each qubit of registers $L$ and $R$. Then the output state can be
rewritten as
\begin{equation}\tag{\ref{eq_rho_2}'}
|{\hat{\rho}}_2\rangle\!\rangle=\frac{1}{2^m}\sum_{k=0}^{2^m-1}\biggl\{
\left[\bigotimes_{i=0}^{n-1}|f_i(k)\rangle\!\rangle\right]^{\rm L}
\otimes |k\rangle\!\rangle^{\rm R}
\otimes \bigotimes_{i=0}^{n-1}2^{i-n+1}\delta(f_i(k)-1)|\hat{Z}\rangle\!\rangle^{{\rm a}_i}
\biggr\}.
\end{equation}
With this representation, it is clear that the output state possesses a mixture of DAC
outputs in its ancilla register and each DAC output is pointed by the address (or pointer)
kept in the ${\rm R}$ register. In order for fetching a specified analog amplitude from this
resultant state, either {\tt FETCH\_SIGNAL\_1} or {\tt FETCH\_SIGNAL\_2} described in
Sec.~\ref{secfetch} should be used.\footnote{\label{foot_human}Note that fetching one resultant
datum takes at least one step for a human in any parallel processing model. On the other hand,
resultant data are usable for subsequent parallel processing. In fact, $\rho_2$ can be directly
used for further parallel processing within the ensemble quantum computing model.}
\end{enumerate}

\subsection{Signal fetching}\label{secfetch}
As mentioned in the algorithm, we introduce two routines to fetch a specified analog amplitude
from $\rho_2$. First we introduce the following routine involving a selective projection. The other
one introduced later does not involve it.\\~\\
{\bf Routine {\tt FETCH\_SIGNAL\_1}:}\\
{\bf Input:} Density matrix $\rho_2$ output from {\tt NONUNITARY\_QUANTUM\_DAC} and integer $k$.\\
{\bf Output:} Analog signal derived from the $k$th component of $\rho_2$.\\
{\bf Construction:}\begin{itemize}
\item[~] As an observable, let us employ $\sum_{i=0}^{n-1}Z_i$ where each Pauli Z operator
$Z_i$ acts on the corresponding ancilla qubit ${\rm a}_i$.
The desired analog amplitude is obtained by
\[
V(k)={\rm Tr}\left[(P_k^{\rm R}\rho_2P_k^{\rm R})(I^{\rm LR}\otimes\sum_{i=0}^{n-1}Z_i)\right],
\]
where $P_k=|k\rangle\langle k|$ is a projector. (Here, we do not normalize the trace of
$P_k^{\rm R}\rho_2P_k^{\rm R}$. This is because the signal intensity of a state is
proportional to its physical population in practice.)
This equation indicates that we project $\rho_2$ onto the subspace of
$|\cdot\rangle|k\rangle|\cdot\rangle$ and perform the ensemble- or time-average measurements
of polarizations $Z_i$ in order to obtain $V(k)$. It is straightforward to find that
\begin{equation}\label{eq_Vfin}
\begin{split}
V(k) &= \frac{1}{2^m}\sum_{i=0}^{n-1}q_i\delta(f_i(k)-1)\\
     &\propto f(k).
\end{split}
\end{equation}
\item[$*$] Comment: we do not assume that $P_k^{\rm R}$ is a real projection. Any method to pick up a
signal from a component state possessing $|k\rangle^{\rm R}\langle k|$ can be used. In addition, we
assume that this routine can be used repeatedly. See the discussion below.
\end{itemize}
Alternatively, one can use the following routine. It looks more natural because it does not contain
a selective projection.\\~\\
{\bf Routine {\tt FETCH\_SIGNAL\_2}:}\\
{\bf Input:} Density matrix $\rho_2$ output from {\tt NONUNITARY\_QUANTUM\_DAC} and integer $k$.\\
{\bf Output:} Analog signal derived from the $k$th component of $\rho_2$.\\
{\bf Construction:}\begin{itemize}
\item[(i)] Apply $|k\rangle^{\rm R}$-controlled $\bigotimes_{i=0}^{n-1}H^{{\rm a}_i}$,
namely, the Hadamard transform acting on the ancilla register controlled by the condition that
register ${\rm R}$ is in the state $|k\rangle$.
\item[(ii)] Perform an ensemble- or time-average measurement using the observable $\sum_{i=0}^{n-1} X_i$,
where $X_i$ is a Pauli X operation ($|0\rangle\langle1|+|1\rangle\langle0|$) acting on the $i$th ancilla
qubit. 
\item[(iii)] Perform the same operation as (i) for state recovery.
\item[$*$] Comment: Because the state of each ancilla qubit in $\rho_2$ has only diagonal elements,
the output signal at step (ii) is originated from the sole component affected by step (i).
Using the well-known relation $HXH=Z$, it is easy to find that the amplitude of this signal is given
by Eq.~(\ref{eq_Vfin}). In addition, this routine is also assumed to be usable repeatedly.
\end{itemize}

Now we discuss physical feasibility of the above routines for signal fetching. As for the projection
$P_k^{\rm R}$ used in {\tt FETCH\_SIGNAL\_1}, it can be regarded as any operation to activate the $k$th
component for readout. Typically this is done by adjusting the frequency of a readout pulse when we
use a sort of magnetic resonance techniques \cite{M78,F91,Fo14}. Alternatively, one may apply some
operation to set only the target component ready for readout ({\em e.g.}, by setting it in resonance with a
readout pulse). In this sense, step (i) of {\tt FETCH\_SIGNAL\_2} is indeed a kind of such an operation,
and hence the two routines are conceptually same. As for the nondemolition measurement requisite for
repeated use of the routines, we simply utilize an ensemble property.
%%%Indeed, with the measurement process we have seen in the above routines, seemingly only one signal can
%%%be fetched from $\rho_2$ owing to demolition by the measurement. Nevertheless, this is not the case in
%%%the present context where we employ the ensemble quantum computing model.
For example, in a molecular spin ensemble system, many identical copies of the same system can be
simultaneously input into the algorithm. Therefore, it is valid to assume that we may have a mixture of
many identical copies of $\rho_2$ in the output. Then both of the routines can be used repeatedly since a
measured signal is a macroscopic property in this case. See, {\em e.g.}, Ref.~\cite{KML11} for validity of
this claim.
For another example, consider a linear optical quantum system in which vertical and horizontal polarizations
of a photon are used as single-qubit basis states $|0\rangle$ and $|1\rangle$, respectively. Suppose we
inject many identical photonic states massively and repeatedly \cite{Peters03} in the input of the algorithm
in the way that they are noninteracting with each other, using time-bin and/or frequency-bin separations
(see, {\em e.g.}, the introductory parts of Refs.~\cite{Or06,Zh08,Ra09,U10} for these separations).
Then we have an photonic ensemble in the state $\rho_2$ in the output (output repeatedly). Although
{\tt FETCH\_SIGNAL\_1} is not usable\footnote{
Photons are not physically connected like molecular spins; a projection using {\em e.g.} polarizing
beam splitters for register ${\rm R}$ does not separate signals in the ancilla register.},
{\tt FETCH\_SIGNAL\_2} works and can be used repeatedly as is clear by its structure.\footnote{Here, we
are discussing an algorithmic structure. It is, of course, a formidable challenge to implement a large
quantum circuit with linear optics \cite{MW12}.}
Among these two examples, the former is conceptually more natural as we may regard the molecular
ensemble as a static memory keeping the DAC outputs.

Furthermore, there is a variant of {\tt FETCH\_SIGNAL\_1} with which we can derive multiple $V(k)$'s at once,
when the system we employ is a bulk-ensemble molecular spin system with a magnetic resonance facility. There is
a strategy called the fetching algorithm \cite{XL02,LX03}. Consider the case where the qubits in the
ancilla register ${\rm A}$ are decoupled to each other, while they are individually coupled to the remaining
part ${\rm LR}$. Then, we measure each ancilla qubit ${\rm a}_i$ by a free-induction-decay (FID) measurement.
The Fourier spectrum (or the FID spectrum) of each measurement exhibits splitting of peaks (each subpeak
corresponds to each $k$). The magnitude of each subpeak in the FID spectrum of ${\rm a}_i$ corresponds to
$v_i(k)={\rm Tr} (P_kZ_i)\rho_2$. Each $V(k)$ is obtained by $V(k)=\sum_{i=0}^{n-1} v_i(k)$. A drawback
is that there are possibly so many nonzero $v_i(k)$'s that corresponding subpeaks are not clearly
separated. The spectrum resolution for this strategy needs to be exponentially fine in general.

In addition to the normal demand for retrieving individual DAC outputs, one may need to test if there
is any nonzero DAC output. Then one has only to measure the ancilla qubits of the output state $\rho_2$
with the observable $\sum_{i=0}^{n-1}Z_i$ instead of calling the above routines. In particular, it is
enough to measure the $(n-1)$th ancilla qubit with $Z$ when one uses the algorithm for solving an unsorted
search problem by choosing the qubit as an oracle qubit. Then it works as a variant of the Br\"uschweiler's
bulk-ensemble search \cite{Br00,SK06} although this is not a mainly intended usage of our algorithm. The
measurement here should also be an average measurement supported by an exponentially large physical
resource. Therefore hardness of the parallel DAC problem shown in Proposition~\ref{prop_chard} is
unchanged, while our algorithm runs in linear time.

In a similar point of view, signal mixing of multiple DAC outputs is also possible by slightly changing the
above routines. To retrieve a mixed signal of $r$ signals, one can either replace the projection in
{\tt FETCH\_SIGNAL\_1} with a rank-$r$ projection for $r$ $|k\rangle$'s or introduce multiple
$|k\rangle^{\rm R}$-controlled $\bigotimes_{i=0}^{n-1}H^{{\rm a}_i}$ operations for $r$ $|k\rangle$'s
instead of just one in {\tt FETCH\_SIGNAL\_2}.

Finally, the following things should be emphasized again.
Our algorithm has been constructed as a nonunitary algorithm using decoherence processes.
Thus the output state is a mixed state even if the initial state is a pure state.
Among the registers ${\rm L}$, ${\rm R}$, and ${\rm A}$, the ${\rm R}$ register works as an address
register and the ${\rm A}$ register works as a data register where DAC outputs are placed in the end
of the algorithm. To fetch a specified analog amplitude, we need to use an average measurement
on the ${\rm A}$ register subsequent to a choice of an address. 

We now turn into a little specific explanation of our algorithm for clarity and then
evaluate the computational cost.

\subsection{A little specific explanation}
It will clarify the behavior of our algorithm if we focus on the evolution of
a single component state. Consider the case of $n=3$. We begin with the end of
step 1 of the algorithm. Let us pick up $|101\rangle^{\rm L}|\cdot\rangle^{\rm R}$ among
the component states $|f_2(k)f_1(k)f_0(k)\rangle^{\rm L}|k\rangle^{\rm R}$. Here, we have assumed
that one of $f_2f_1f_0$'s is $101$.

In step 2, Subroutine {\tt S\_1} is applied. In this subroutine, an ancilla register
$|0^{\rm a_2}0^{\rm a_1}0^{\rm a_0}\rangle^{\rm A}$
is attached and the ${\rm C_0H}$ operation is applied to each pair of $f_i$ and ${\rm a_i}$.
This changes the component state we are tracking to
$|101\rangle^{\rm L}|\cdot\rangle^{\rm R}|0+0\rangle^{\rm A}$.
Then the dephasing operations are applied in this subroutine.
The component state of our interest evolves into $(|101\rangle^{\rm L}\langle101|
\otimes|\cdot\rangle^{\rm R}\langle\cdot|\otimes[|0\rangle^{\rm a_2}\langle0|
\otimes (I/2)^{\rm a_1}\otimes|0\rangle^{\rm a_0}\langle0|]^{\rm A}$.

Then, in step 3, depolarizing operations (with individually different parameter
values) are applied to the ancillary qubits. The component state we are tracking becomes
$(\cdots)^{\rm LR}\otimes\left\{|0\rangle^{\rm a_2}\langle0|\otimes(I/2)^{\rm a_1}
\otimes [(1-1/2^2)(I/2)+1/2^2|0\rangle\langle0|]^{\rm a_0}\right\}^{\rm A}$.

Finally, we consider the signal fetching. In case we employ {\tt FETCH\_SIGNAL\_1},
the average measurement is performed with the observable $Z_2+Z_1+Z_0$ together with
a projection onto a component state.
For the component state we are tracking, this results in the output signal
${\rm Tr}Z_2 |0\rangle^{\rm a_2}\langle0| + {\rm Tr}Z_1 (I/2)^{\rm a_1}
 + {\rm Tr}Z_0 [(1-1/2^2)(I/2)+1/2^2|0\rangle\langle0|]^{\rm a_0}
= 1\times(1/2^0)+0\times(1/2^1)+1\times(1/2^2)$. This is proportional to the
value $101$.
In case we employ {\tt FETCH\_SIGNAL\_2}, first a controlled Hadamard operation is applied
so that the ancilla qubits of our target component are transformed by the Hadamard
operation. The component state evolves into $(\cdots)^{\rm LR}\otimes
\left\{|+\rangle^{\rm a_2}\langle+|\otimes(I/2)^{\rm a_1}
\otimes [(1-1/2^2)(I/2)+1/2^2|+\rangle\langle+|]^{\rm a_0}\right\}^{\rm A}$.
Then the average measurement is performed with the observable $X_2+X_1+X_0$.
This results in the output signal
${\rm Tr}X_2 |+\rangle^{\rm a_2}\langle+| + {\rm Tr}X_1 (I/2)^{\rm a_1}
 + {\rm Tr}X_0 [(1-1/2^2)(I/2)+1/2^2|+\rangle\langle+|]^{\rm a_0}
= 1\times(1/2^0)+0\times(1/2^1)+1\times(1/2^2)$, which is same as that of the above
case. After achieving this desired signal, the controlled Hadamard operation is
applied again to restore the state.

It is easy to consider the evolution of any other component state among
$|f_2(k)f_1(k)f_0(k)\rangle^{\rm L}|k\rangle^{\rm R}$'s.
The above explanation has been for the case of $n=3$, but larger $n$ does not
make much difference in the explanation.

So far we have introduced our algorithm and also given a specific explanation.
We will next discuss the computational cost of our algorithm.

\subsection{Computational cost}
It should be firstly mentioned that in general each quantum operation including decoherence
acts on its target (sub)system for all the components at once in the ensemble quantum
computing model \cite{NC2000,Jones2010}. (This is different from classical parallel computing
where usually one operation acts on one target component.) We analyze computational costs of
our algorithm {\tt NONUNITARY\_QUANTUM\_DAC} on the basis of this standpoint.
%%%The main cost in performing our algorithm becomes the physical resource rather than
%%%run time.

The run time of our algorithm is $O(m+n)$ for all the steps except for the process
of applying $C_f$ in the first step. Thus the dominant factor is the circuit depth of
$C_f$ implementing function $f$. This is however hidden behind a query cost in the
convention of computational complexity theory \cite{AB09}. Hence we can state that
the time cost of our algorithm is $O(m+n)$.

As for space, we use only $m+2n$ qubits apart from those used inside $C_f$. Although
$C_f$ may use a certain number of ancilla qubits internally, this is also hidden
behind a query cost. Thus we can state that our algorithm uses $m+2n$ qubits.
It should be noted that counting qubits appearing in the algorithm is quite superficial
as this does not reflect the resource needed for average measurements.

As for the query cost, $f$ is called only once as a function acting on a superposition
or a mixed state. Thus we use only one query in our algorithm.

The dominant cost in the use our algorithm is obviously the cost for an average measurement
to fetch a specified result from the resultant ensemble. Since each intensity has the factor
$1/2^m$ as shown in Eq.~(\ref{eq_Vfin}), exponentially many identical data are needed to obtain
the signal intensity in the presence of noise. In case noise is a random one, accumulation of
$L$ data results in the signal per noise ratio
$\propto L/2^m:\sqrt{L}$. Hence $\mathcal{O}(2^{2m})$ identical data are needed to overcome noise.
Thus we need a bulk ensemble of $\mathcal{O}(2^{2m})$ identical systems for ensemble averaging
or $\mathcal{O}(2^{2m})$ data acquisitions for time averaging.

In addition, in case {\tt FETCH\_SIGNAL\_2} is employed for data fetching, the controlled
Hadamard operations should be constructed, for which another $m$ ancilla qubits and $O(m+n)$ time
are sufficient. This cost applies for each time we fetch a signal from the resultant ensemble.

Despite the drawback of the measurement cost, it is still feasible for physical
implementation when we utilize a massively parallel molecular system, such as those for
NMR/ESR \cite{Meh2001}, for up to $10$ or a little more qubits in the present technology.
Theoretically, it is physically feasible when $m\lesssim (\log_2 10^{23})/2\simeq 38$
and clean qubits are initially available by the use of very low temperature or an
efficient state-initialization technique \cite{SV99,B02,KKN03,SK05}.

\section{Quantumness of correlation in the algorithm}\label{sec_quantumness}
It is often discussed if a quantum correlation like entanglement \cite{PV06,H09review} and
quantum discord \cite{OZ2001} is a source of computational power of quantum computers although
there is no definite answer presently \cite{H09review,Br11}.

In our algorithm, there is clearly a large amount of entanglement for non-constant $f$ 
({\em i.e.}, $f$ such that $f(k)$ is not same for all $k$) when we choose a pure-state process
by employing $|\psi_o\rangle$ in step 1. The controversial case is when we choose a mixed-state
process by employing $\rho_o$ in step 1. Then, there is neither entanglement nor quantum discord
in the states $\rho_o$ [Eq.~(\ref{eq_rho_o})], $\rho_1$ [Eq.~(\ref{eq_rho_1})], and
$\rho_2$ [Eq.~(\ref{eq_rho_2})] because they are written in the form of a sum of products
of diagonal states. There is, however, a certain quantumness in the process: In
Subroutine {\tt S\_1}, firstly the ${\rm C_0H}$ operation is applied to each pair of $f_i$ and
its corresponding ancilla qubit. This operation changes the ancilla register state to
$|a(k)\rangle^{\rm A}=\bigotimes_i(|+\rangle_i\delta(f_i(k))+|0\rangle_i\delta(f_i(k)-1))$.
The entire state at this point is
\begin{equation}\label{eq_tilrho}
\widetilde{\rho}=(1/2^m)\sum_{k=0}^{2^m-1}|f(k)\rangle^{\rm L}\langle f(k)|
\otimes|k\rangle^{\rm R}\langle k|\otimes|a(k)\rangle^{\rm A}\langle a(k)|.
\end{equation}
It does not have entanglement as it is a convex combination of product states. It may possess,
however, quantum discord because $|+\rangle$ and $|0\rangle$ are nonorthogonal to each other.
(Historically, this kind of mixed state, namely, a mixture of pure states locally possessing
X-basis eigenstates and those locally possessing Z-basis eigenstates, has been used as a very
typical example \cite{HV01,HKZ04} to exhibit nonzero quantum discord).
Quantumness appears at this point only, in the mixed-state process. Note that $\widetilde{\rho}$
is invariant under partial transpose so that any method using a spectrum change caused by partial
transpose \cite{SRN12} cannot be used for quantifying quantumness of correlation in the present case.

In the following, we investigate quantumness of correlation in $\widetilde{\rho}$ by calculating
quantum discord under a certain setting simplified for readability of equations. Let us begin
with a brief explanation of the definition of quantum discord.

Briefly speaking, for bipartite system $\mathcal{AB}$, quantum discord is defined as a discrepancy
between two different forms of mutual information, which are equivalent in classical regime \cite{OZ2001}.
One of the forms is $\mathcal{I}(\mathcal{A}:\mathcal{B})
=S(\mathcal{A})+S(\mathcal{B})-S(\mathcal{AB})$ where $S(\mathcal{AB})$
is the von Neumann entropy\footnote{In general, the von Neumann entropy of a density matrix $\rho$
is calculated as $S(\rho)=-{\rm Tr}\rho\log_2\rho=-\sum_\lambda \lambda\log_2\lambda$ where
$\lambda$'s are the eigenvalues of $\rho$.} of the density matrix $\rho^\mathcal{AB}$ of $\mathcal{AB}$
and $S(\mathcal{A})$ ($S(\mathcal{B})$) is that of the reduced density matrix $\rho^\mathcal{A}$
($\rho^\mathcal{B}$) of $\mathcal{A}$ ($\mathcal{B}$). 
It should be noted that we employ the base-two logarithm when calculating entropies in this paper.
The other form is $\mathcal{J}(\mathcal{A}:\mathcal{B})
=S(\mathcal{B})-S(\mathcal{B}|\mathcal{A})$ where
$S(\mathcal{B}|\mathcal{A})=\sum_j p_j S(\mathcal{B}|j)$ is a conditional entropy with $p_j$ the
probability that event $j$ appears on $\mathcal{A}$. It is clear that $S(\mathcal{B}|\mathcal{A})$
depends on the measurement we choose, in general for a quantum system\footnote{
Thus, the theory of quantum discord has a focus on the measurement stage. This is in contrast to the
entanglement theory, which has a focus on the state generation stage \cite{PV06,H09review}.
}.
As these two forms are equivalent in classical regime, their minimum discrepancy over all available
measurements is regarded as an amount of quantum correlation.

Following the original formulation of Ref.~\cite{OZ2001}, the measurement is assumed to
be a simple von Neumann-type measurement acting on $\mathcal{A}$, represented as a complete set
of orthogonal one-dimensional projectors, $\Pi^\mathcal{A}=\{\Pi^\mathcal{A}_j\}$.
Then, we have $S(\mathcal{B}|\mathcal{A})=S(\rho^\mathcal{AB}|\Pi^\mathcal{A})$
where $S(\rho^\mathcal{AB}|\Pi^\mathcal{A})=\sum_jp_jS(\sigma_j)$ with
$p_j={\rm Tr}(\Pi^\mathcal{A}_j\otimes I^\mathcal{B}\rho^\mathcal{AB})$ and
$\sigma_j=(\Pi^\mathcal{A}_j\otimes I^\mathcal{B}\rho^\mathcal{AB}\Pi^\mathcal{A}_j\otimes I^\mathcal{B})/p_j$.

On the basis of the above equations, quantum discord \cite{OZ2001} is written as
\begin{equation}\label{eq_discord}
\begin{split}
D_{\underline{\mathcal{A}}}(\rho^\mathcal{AB})
&\equiv \underset{\Pi^\mathcal{A}}{\rm min}\left[
\mathcal{I}(\mathcal{A}:\mathcal{B})-\mathcal{J}(\mathcal{A}:\mathcal{B})\right]\\
&= S(\rho^\mathcal{A})-S(\rho^\mathcal{AB})
   +\underset{\Pi^\mathcal{A}}{\rm min}S(\rho^\mathcal{AB}|\Pi^\mathcal{A}),
\end{split}
\end{equation}
where we put subscript $\underline{\mathcal{A}}$ for indicating the subsystem the measurement acts on.
In general, quantum discord depends on which subsystem is measured, {\em i.e.}, the values of
$D_{\underline{\mathcal{A}}}(\rho^\mathcal{AB})$ and $D_{\underline{\mathcal{B}}}(\rho^\mathcal{AB})$
are often different. It should be mentioned that, although one may consider a positive operator-valued
measurement (POVM) instead of the simple one (see, {\em e.g.}, Refs.~\cite{HV01,HKZ04}),
the present definition is also common \cite{OZ2001,Luo08,Ali10}.

Our interest is quantum discord for $\widetilde{\rho}$ given in Eq.~(\ref{eq_tilrho}). For
simplicity, let us consider the bipartite splitting between ${\rm LR}$ and ${\rm A}$. Then,
the quantities $D_{\underline{\rm LR}}(\widetilde{\rho})$ and
$D_{\underline{\rm A}}(\widetilde{\rho})$ are of our concern.
For an explicit calculation, we are going to specify $f$ and perform a numerical computation to find the
minimum in Eq.~(\ref{eq_discord}). Here, a rather simple case is considered for the sake of readability
of resultant equations. Let us consider a clock function $f_{\rm c}$ such that $f_{\rm c}(k)=0
\equiv 0_{n-1}\cdots0_0$ for a half of $k$'s and $f_{\rm c}(k)=2^{n-1}\equiv 1_{n-1}0_{n-2}\cdots0_0$
for the remaining $k$'s, for $f$. We write $\widetilde{\rho}$ under this condition as
$\widetilde{\rho}_{\rm c}$.

First we calculate the von Neumann entropy for $\widetilde{\rho}_{\rm c}$ and reduced density
matrices $\widetilde{\rho}_{\rm c}^{\rm LR}$ and $\widetilde{\rho}_{\rm c}^{\rm A}$. It is straightforward
to obtain
\[
S(\widetilde{\rho}_{\rm c}) = S(\widetilde{\rho}_{\rm c}^{\rm LR}) = m.
\]
It is also easy to calculate $S(\widetilde{\rho}_{\rm c}^{\rm A})$:
\[\begin{split}
S(\widetilde{\rho}_{\rm c}^{\rm A})
&=S\left(\frac{|+\rangle\langle+|+|0\rangle\langle0|}{2}\otimes|+\cdots +\rangle\langle +\cdots +|\right)
 =S\left(\frac{|+\rangle\langle+|+|0\rangle\langle0|}{2}\right)\\
&=-\sum_{\pm}\lambda_{\pm}\log_2\lambda_{\pm}
\end{split}\]
with $\lambda_{\pm}=\frac{1}{2}\pm\frac{\sqrt{2}}{4}$. These results are used in later calculations.

We are now going to find the values of $D_{\underline{\rm LR}}(\widetilde{\rho}_{\rm c})$ and
$D_{\underline{\rm A}}(\widetilde{\rho}_{\rm c})$. The easier one is
$D_{\underline{\rm LR}}(\widetilde{\rho}_{\rm c})$. We have
\[
D_{\underline{\rm LR}}(\widetilde{\rho}_{\rm c})=S(\widetilde{\rho}_{\rm c}^{\rm LR})-S(\widetilde{\rho}_{\rm c})
+\underset{\Pi^{\rm LR}}{\rm min}S(\widetilde{\rho}_{\rm c}|\Pi^{\rm LR})
=\underset{\Pi^{\rm LR}}{\rm min}S(\widetilde{\rho}_{\rm c}|\Pi^{\rm LR}).
\]
This is obviously $\ge 0$ by the property of von Neumann entropy and it is possible to
achieve zero: Consider computational basis vectors $|lr\rangle$ of ${\rm LR}$ and employ
projectors $\Pi_{lr}^{\rm LR}=|lr\rangle\langle lr|$. Then $S(\widetilde{\rho}_{\rm c}|\Pi^{\rm LR})$
vanishes because, for each $lr$, the projected state
$(\Pi^{\rm LR}_{lr}\otimes I^{\rm A}\widetilde{\rho}_{\rm c}\Pi^{\rm LR}_{lr}\otimes I^{\rm A})/p_{lr}$
is a pure state. Consequently, \[D_{\underline{\rm LR}}(\widetilde{\rho}_{\rm c})=0.\]
This result is actually trivially obtained if we use \cite[Proposition 3]{OZ2001}, considering the
form of $\widetilde{\rho}_{\rm c}$.
Although the discord just calculated has vanished, the other one is shown to be nonzero as below.

We turn into the calculation of $D_{\underline{\rm A}}(\widetilde{\rho}_{\rm c})$. As we know
\begin{equation}\label{eq_discord_A}
\begin{split}
D_{\underline{\rm A}}(\widetilde{\rho}_{\rm c})
&=S(\widetilde{\rho}_{\rm c}^{\rm A})-S(\widetilde{\rho}_{\rm c})
  +\underset{\Pi^{\rm A}}{\rm min}S(\widetilde{\rho}_{\rm c}|\Pi^{\rm A})\\
&=-\sum_{\pm}\lambda_{\pm}\log_2\lambda_{\pm}-m
  +\underset{\Pi^{\rm A}}{\rm min}S(\widetilde{\rho}_{\rm c}|\Pi^{\rm A}),
\end{split}
\end{equation}
now we need to compute $\underset{\Pi^{\rm A}}{\rm min}S(\widetilde{\rho}_{\rm c}|\Pi^{\rm A})$.
Because there is no phase factor in the amplitudes of component states in $\widetilde{\rho}_{\rm c}$,
we can choose $\Pi^{\rm A}=\{|\phi_1\rangle\langle\phi_1|\otimes|+\cdots +\rangle\langle +\cdots +|,
|\phi_2\rangle\langle\phi_2|\otimes|+\cdots +\rangle\langle +\cdots +|\}$
with $|\phi_1\rangle=\cos\theta|0\rangle+\sin\theta|1\rangle$ and
$|\phi_2\rangle=\sin\theta|0\rangle-\cos\theta|1\rangle$, and perform a minimization
over $\theta$. After a tedious but straightforward calculation (see the supplementary material),
we find
\[
\begin{split}
\underset{\Pi^{\rm A}}{\rm min}S(\widetilde{\rho}_{\rm c}|\Pi^{\rm A})
=m + \underset{\theta}{\rm min}\biggl[\!&
-H\!\left(\!\frac{(\cos\theta+\sin\theta)^2}{4}\!+\!\frac{\cos^2\theta}{2}\!\right)\\
&+\frac{1}{2}H\!\left(\!\frac{(\cos\theta+\sin\theta)^2}{2}\!\right)+\frac{1}{2}H\!\left(\cos^2\theta\right)
\biggr],
\end{split}
\]
where $H(x)=-x\log_2 x-(1-x)\log_2(1-x)$ is the binary entropy function.
We used a simple numerical search to find the minimum. The result is
${\rm min}_{\Pi^{\rm A}}S(\widetilde{\rho}_{\rm c}|\Pi^{\rm A})\approx m-0.399124$.
In addition, we have $-\sum_{\pm}\lambda_{\pm}\log_2\lambda_{\pm} \approx 0.600876$.
Substituting these values into Eq.~(\ref{eq_discord_A}), we obtain
\[
D_{\underline{\rm A}}(\widetilde{\rho}_{\rm c})\approx 0.201752.
\]

Thus, we have found a nonzero quantum discord. Although there is no entanglement when we choose
the mixed-state process from the first in our algorithm, a certain quantum correlation still exists.
In the above calculation, we have chosen a simple clock function for $f$ for simplicity, but in
general nonvanishing quantum discord is involved owing to the nonorthogonality of $|+\rangle$
and $|0\rangle$ when $f$ is a nonconstant function.

\section{Concluding remarks}\label{sec_con}
We have considered a parallel DAC problem handling digital data generated by a
function $f:\{0,1\}^m\rightarrow\{0,1\}^n$ ($m, n \ge 1$ are integers).
This problem has been proven to be a difficult problem under a natural assumption
on physical noise, as shown in Proposition \ref{prop_chard} by reducing SAT to the
problem. A unitary quantum DAC has also been proven to be hard in the sense that a
polynomial-size quantum DAC circuit cannot be constructed for certain instances
unless {\tt NP} is a subset of {\tt BQP}. Thus it is unlikely to find an algorithm to
perform parallel DAC using a resource scaling polynomially in $m$ and $n$ in
physically feasible computational models.

On the basis of the above observation, we have developed an algorithm to perform a parallel
DAC using nonunitary processes. In particular, depolarization and dephasing channels
have been utilized. The output state given by Eq.~(\ref{eq_rho_2}) [or, equivalently,
Eq.~(\ref{eq_rho_2}')] is a mixture of component states possessing DAC output data and their
pointers. One can fetch each analog amplitude from the state by using a selective choice of a
pointer and an average measurement. The developed algorithm runs with linear time, linear
space, and a single query apart from the cost for fetching each amplitude from its output.
As mentioned in footnote \ref{foot_human}, the number of amplitude-fetching calls is not important.
It is usual for any parallel processing that picking up one item from the output takes at least
one step, while the output can be directly used for further parallel processing if we continue
to use the same system. This is true for our algorithm.

A drawback of our algorithm is the required physical resource: we need either a massive
bulk-ensemble system or a many-time accumulation to derive DAC outputs as visible signals.
With this drawback, we have discussed that, theoretically, our algorithm is still feasible for
implementation using a molecular spin system when $m\lesssim 38$.

We have also discussed on quantumness of correlation between registers used in our
algorithm. As we have seen, we may choose either a pure-state process or a mixed-state
process for the first part of our algorithm. Entanglement exists for the former case, while
it does not for the latter case. The mixed-state process, however, involves quantum discord.
Hence our algorithm is quantum even when a mixed-state process is employed in the sense
that it uses a quantum state possessing quantum correlation.

Our strategy has been a nonstandard quantum processing using decoherence. There might be
other approaches as quantum DAC has not been widely studied so far. Although its hardness
as a computational problem has been proved here, it is to be hoped that different physical
models and algorithmic strategies will be tried for more economical constructions.

\section*{Acknowledgments}
This work was supported by the Grant-in-Aid for Scientific Research from JSPS
(Grant No. 25871052).

\bibliographystyle{spmpsci-mod}
\bibliography{refs_qdac}
\newpage
\begin{center}
{\large Supplementary material of ``Quantum digital-to-analog conversion algorithm using decoherence''}\\
{Akira SaiToh}\\
Department of Computer Science and Engineering, Toyohashi University of Technology\\
1-1 Hibarigaoka, Tenpaku-cho, Toyohashi, Aichi 441-8580, Japan\\
Email: saitoh@sqcs.org
\end{center}
\newcommand{\iPi}{{\it\Pi}}
\section*{Details of the calculation of $S(\widetilde{\rho_{\rm c}}|\iPi^{\rm A})$}
Here is the details of the calculation of $S(\widetilde{\rho_{\rm c}}|\iPi^{\rm A})$
that has appeared in Sec.~4. As we have seen in the section, we choose $\iPi^{\rm A}
=\{\iPi^{\rm A}_1, \iPi^{\rm A}_2\}
=\{|\phi_1\rangle\langle\phi_1|\otimes|+\cdots+\rangle\langle+\cdots+|,
|\phi_2\rangle\langle\phi_2|\otimes|+\cdots+\rangle\langle+\cdots+|\}$
with $|\phi_1\rangle=\cos\theta|0\rangle+\sin\theta|1\rangle$ and
$|\phi_2\rangle=\sin\theta|0\rangle-\cos\theta|1\rangle$.
Let us write ${P'}^{\rm LR}=
\sum_{k|f_{\rm c}(k)=0}|f_{\rm c}(k)\rangle^{\rm L}\langle f_{\rm c}(k)|\otimes|k\rangle^{\rm R}\langle k|$
and
${P''}^{\rm LR}=
\sum_{k|f_{\rm c}(k)=2^{n-1}}|f_{\rm c}(k)\rangle^{\rm L}\langle f_{\rm c}(k)|\otimes|k\rangle^{\rm R}\langle k|$.
By definition, we have
\[
S(\widetilde{\rho_{\rm c}}|\iPi^{\rm A})=p_1S(\rho_1)+p_2S(\rho_2)
\]
with
\[
\left\{
\begin{array}{rl}
p_1&=\langle\iPi^{\rm A}_1\rangle=\frac{1}{2}\left[\frac{(\cos\theta+\sin\theta)^2}{2}+\cos^2\theta\right],\\
\rho_1&=\frac{\iPi^{\rm A}_1\widetilde{\rho_{\rm c}}\iPi^{\rm A}_1}{p_1}=
\begin{array}{rl}
\frac{1}{p_12^m}\biggl[&
{P'}^{\rm LR}\otimes\frac{(\cos\theta+\sin\theta)^2}{2}|\phi_1\rangle\langle\phi_1|\otimes|+\cdots+\rangle\langle+\cdots+|\\
&+{P''}^{\rm LR}\otimes\cos^2\theta|\phi_1\rangle\langle\phi_1|\otimes|+\cdots+\rangle\langle+\cdots+|
\biggr],
\end{array}
\end{array}
\right.
\]
and
\[
\left\{
\begin{array}{rl}
p_2&=\langle\iPi^{\rm A}_2\rangle=\frac{1}{2}\left[\frac{(\sin\theta-\cos\theta)^2}{2}+\sin^2\theta\right],\\
\rho_2&=\frac{\iPi^{\rm A}_2\widetilde{\rho_{\rm c}}\iPi^{\rm A}_2}{p_2}=
\begin{array}{rl}
\frac{1}{p_22^m}\biggr[&
{P'}^{\rm LR}\otimes\frac{(\sin\theta-\cos\theta)^2}{2}|\phi_2\rangle\langle\phi_2|\otimes|+\cdots+\rangle\langle+\cdots+|\\
&+{P''}^{\rm LR}\otimes\sin^2\theta|\phi_2\rangle\langle\phi_2|\otimes|+\cdots+\rangle\langle+\cdots+|
\biggl].
\end{array}
\end{array}
\right.
\]
Now $S(\widetilde{\rho_{\rm c}}|\iPi^{\rm A})$ is further calculated as follows.
\[
\begin{split}
S(\widetilde{\rho_{\rm c}}|\iPi^{\rm A})&=p_12^{m-1}\left[
-\frac{(\cos\theta+\sin\theta)^2}{p_12^{m+1}}\log_2\frac{(\cos\theta+\sin\theta)^2}{p_12^{m+1}}
-\frac{\cos^2\theta}{p_12^m}\log_2\frac{\cos^2\theta}{p_12^m}
\right]\\
&~~~\;+p_22^{m-1}\left[
-\frac{(\sin\theta-\cos\theta)^2}{p_22^{m+1}}\log_2\frac{(\sin\theta-\cos\theta)^2}{p_22^{m+1}}
-\frac{\sin^2\theta}{p_22^m}\log_2\frac{\sin^2\theta}{p_22^m}
\right]\\
&=-\frac{(\cos\theta+\sin\theta)^2}{4}\left[\log_2(\cos\theta+\sin\theta)^2-\log_2p_1-(m+1)\right]\\
&~~~\;-\frac{\cos^2\theta}{2}\left(\log_2\cos^2\theta-\log_2p_1-m\right)\\
&~~~\;-\frac{(\sin\theta-\cos\theta)^2}{4}\left[\log_2(\sin\theta-\cos\theta)^2-\log_2p_2-(m+1)\right]\\
&~~~\;-\frac{\sin^2\theta}{2}\left(\log_2\sin^2\theta-\log_2p_2-m\right)\\
&=-\frac{(\cos\theta+\sin\theta)^2}{4}\left[\log_2\frac{(\cos\theta+\sin\theta)^2}{2}-\log_2p_1-m\right]\\
&~~~\;-\frac{\cos^2\theta}{2}\left(\log_2\cos^2\theta-\log_2p_1-m\right)\\
&~~~\;-\frac{(\sin\theta-\cos\theta)^2}{4}\left[\log_2\frac{(\sin\theta-\cos\theta)^2}{2}-\log_2p_2-m\right]\\
&~~~\;-\frac{\sin^2\theta}{2}\left(\log_2\sin^2\theta-\log_2p_2-m\right)\\
&=m+p_1\log_2p_1+p_2\log_2p_2\\
&~~~\;\begin{array}{rl}-\frac{1}{2}\biggl[&
\frac{(\cos\theta+\sin\theta)^2}{2}\log_2\frac{(\cos\theta+\sin\theta)^2}{2}+\cos^2\theta\log_2\cos^2\theta\\
&+\frac{(\sin\theta-\cos\theta)^2}{2}\log_2\frac{(\sin\theta-\cos\theta)^2}{2}+\sin^2\theta\log_2\sin^2\theta
\biggr]\end{array}\\
&=m-H(p_1)+\frac{1}{2}H\left(\frac{(\cos\theta+\sin\theta)^2}{2}\right)+\frac{1}{2}H\left(\cos^2\theta\right).
\end{split}
\]

\end{document}